\def \d {\mathrm{d}}
\definecolor{blue}{rgb}{0,0.2,1}
\definecolor{red}{rgb}{0.9,0,0}
\newcommand{\bb}{\boldsymbol}
\newtheorem{theorem}{Theorem}
\newtheorem{lemma}[theorem]{Lemma}
\newtheorem{definition}[theorem]{Definition}
\def \d {\mathrm{d}}
\def \e {\mathrm{e}}
\def \i {\mathrm{i}}
\begin{document}

\title{Quantum simulation of partial differential equations via Schr\"odingerisation}

\author{Shi Jin}
\affiliation{Institute of Natural Sciences, School of Mathematical Sciences, MOE-LSC, Shanghai Jiao Tong University, Shanghai, 200240, P. R. China}
\affiliation{ Shanghai Artificial Intelligence Laboratory, Shanghai, China}

\author{Nana Liu}
\email{nana.liu@quantumlah.org}
\affiliation{Institute of Natural Sciences, School of Mathematical Sciences, MOE-LSC, Shanghai Jiao Tong University, Shanghai, 200240, P. R. China}
\affiliation{ Shanghai Artificial Intelligence Laboratory, Shanghai, China}
\affiliation{University of Michigan-Shanghai Jiao Tong University Joint Institute, Shanghai 200240, China}
\author{Yue Yu}
\affiliation{Institute of Natural Sciences, School of Mathematical Sciences, MOE-LSC, Shanghai Jiao Tong University, Shanghai, 200240, P. R. China}

\date{\today}

\begin{abstract} 
We present a simple new way--called {\it Schr\"odingerisation}-- to simulate general linear partial differential equations via quantum simulation. Using a simple new transform, referred to as {\it the warped phase transformation}, any linear partial differential equation can be recast into a system of Schr\"odinger's equations – in real time — in a straightforward way. This can be seen directly on the level of the dynamical equations without more sophisticated methods. This approach is not only applicable to PDEs for classical problems but also those for quantum problems -- like the preparation of quantum ground states, Gibbs states and the simulation of quantum states in random media in the semiclassical limit. 
\end{abstract}
\maketitle 

\section{Introduction}

Quantum simulation is one of the most natural tasks for a quantum device — by preparing outputs of Schr\"odinger's equations directly via its own evolution \cite{feynman2018simulating}. However, what can we say about the suitability of quantum systems in simulating classical dynamical laws, typically in the form of ordinary or partial differential equations (ODEs/PDEs) that are not Schr\"odinger's equations? This is of great importance for broader applications of quantum computing in science and engineering. 

While the most obvious applications are for classical problems, simulating classical dynamics is also important for \textit{quantum problems}. The most well-known include the preparation of quantum ground states \cite{ge2019faster, lin2020near} and Gibbs states \cite{poulin2009sampling}, which benefit from non-unitary dynamics, and are particularly important in areas like quantum chemistry \cite{lloyd1996universal} and optimisation. 

We know quantum simulation of classical dynamics is in principle possible, if we accept reductionism and that the world is fundamentally quantum mechanical. All classical dynamical laws are, with the exception of relativity, in principle derivable from underlying Schr\"odinger's equations. From a computational perspective, we also know that any classical gate can be trivially embedded into a quantum gate, so we expect quantum gates to be able to simulate classical ones. However, the quantum degrees of freedom involved could typically be significantly larger than that of the classical systems one wishes to simulate. An essential question is then:  are there less resource-intensive ways to represent classical dynamics with Schr\"odinger's equations? 

Consider the case of numerical solutions   to linear PDEs, which, upon spatial and temporal discretizations, become a system of linear algebraic equations. Quantum algorithm speedups in solving a system of linear algebraic equations \cite{harrow2009quantum, childs2017quantum} can subsequently lead to possible polynomial or super-polynomial speedups in solving PDEs \cite{childs2021high, montanaro2016quantum}. For nonlinear counterparts see \cite{liu2021efficient,jin2022quantum}. However, here quantum simulation only plays the role of an algorithmic primitive — due to discretisation in time, one does not prepare solution states continuously in time $t$ by evolution of a Schr\"odinger equation in time $t$.  

 To obviate the discretisation in time, a main difficulty is in finding a way to represent non-unitary dynamics with a unitary one, which can be evolved  by quantum simulation like the Schr\"odinger equation. One way is via qubitisation \cite{low2019hamiltonian, gilyen2019quantum} (or block-encoding) and this has very recently been applied to linear PDEs \cite{an2022theory}. This involves unitary dilation methods (e.g. \cite{horn2012matrix}) and has origins in quantum signal processing \cite{low2016methodology}. While this fairly general formalism can in principle approximate the action of any non-unitary operator, it relies heavily on building linear combinations of quantum states and finding suitable polynomial approximations to the non-unitary operator, which is not always simple to describe or to implement in practise. 

For the heat equation, another method is the imaginary time evolution approach, where the heat equation can be converted to a Schr\"odinger equation through a change to \text \it{imaginary} time $t \rightarrow it$ \cite{lehtovaara2007solution}. However, the state obeying Schr\"odinger's equation and its imaginary time counterpart do not have the same evolution. This means that, except for  the steady state solution, extra resources are necessary to map between the solution in the unitarily evolving system to the other. For instance, one requires tomographic measurements of quantum states at each small time step in the unitary evolution \cite{motta2020determining}. Furthermore, the imaginary time evolution method does not apply beyond the heat equation.

Thus we have an important question: \textit{Is there a simpler and generic way of obtaining Schr\"odinger's equations naturally from any given linear dynamics?}

We propose a new paradigm — based on a simple transformation called the \textit{warped phase transformation} — that can map any linear PDE to Schr\"odinger equations in real time. By the quantum simulation of this Schr\"odinger dynamics in time $t$, it is possible to prepare solutions of the original PDEs at any time $t$, in the form of a quantum state. We call this the \textit{Schr\"odingerisation approach}.

To illustrate this approach, we present the heat equation example in detail. With a warped phase transformation and a Fourier transform, the heat equation can be transformed into a system of uncoupled Schr\"odinger equations with rescaled time coordinates — \textit{in real time} — thus providing a conceptual alternative to imaginary time evolution methods. We show how the quantum simulation of these Schr\"odinger's equations can be used to prepare quantum ground states and Gibbs states.

In another example, we apply the Schr\"odingerisation approach to a linear radiative transport equation. This can be used to simulate quantum systems in random media in the semi-classical limit. 

Finally, we show how to generalise the Schr\"odingerisation approach to any linear PDE.

\section{Background}
Consider linear $(d+1)$-dimensional PDEs for $u=u(t,x)$ that is first-order in time $t \geq 0$
\begin{align}
\partial_t u=\mathcal{L}u, \quad u(0,x)=u_0(x).
\end{align}
Here $x \in \mathbb{R}^d$ is the position in $d$ dimensions, $\partial_{(\cdot)}$ is the partial derivative with respect to $(\cdot)$ and $\mathcal{L}$ is a linear differential operator with respect to $x$. 

Schr\"odinger's equation is such a linear PDE equation where $u$ is called the wavefunction and $\mathcal{L}u=-i(-\nabla_x^2+V(x))u$, where $\nabla_x^2$ is the Laplace operator with respect to $x$ and $V(x)$ is the potential function. To find numerical solutions to this problem, one can  discretise in $x$, with uniform mesh sizes along each dimension $\Delta x=2/M$ where $M$ is a positive even integer. Let $\bb{u}(t)$ represent the $M^d$-dimensional vector whose entries are $u(t,x_i)$ at grid points with $i=1,...,M^d$. By definition of the wavefunction, the $l_2$ norm $\|\bb{u}(t)\|=1$. The entries to this vector can be considered as amplitudes of the $M^d$-dimensional quantum state $|u(t)\rangle=\sum_{i=1}^{M^d}u(t, x_i)|i\rangle$ where $\{|i\rangle\}$ is an orthonormal basis set. When given a state spanned by $\{|0\rangle, |1\rangle \}$ over the field $\mathbb{C}$, this is called a qubit. Thus $|u(t)\rangle$ can be described by a system of $d\log_2(M)$ qubits. It is straightforward to see that $\bb{u}(t)$ satisfies the linear system of ODEs 
\begin{align} \label{eq:schrode1}
i\frac{d\bb{u}(t)}{dt}=\bb{H}\bb{u}(t)
\end{align}
where $\bb{H}$ is a $M^d \times M^d$ Hermitian matrix that results from a discretisation of the Schr\"odinger Hamiltonian $-\nabla^2_x+V(x)$ in $x$. The solution to Eq.~\eqref{eq:schrode1} is clearly $\bb{u}(t)=\exp(-i\bb{H}t)\bb{u}(0)$ and one has the corresponding quantum state evolution $|u(t)\rangle=\exp(-i\bb{H}t)|u(0)\rangle$. Quantum simulation addresses the question of how to realise this $\exp(-i\bb{H}t)$ operator, when given $\bb{H}$, and an estimation of the resources required. We focus on the time-independent setting for now. 

Quantum simulation falls roughly into two categories: analogue and digital \cite{daley2022practical}. In the analogue case, a quantum system that naturally realises $\bb{H}$ or can be mapped to this Hamiltonian is found. However, if such an analogue quantum system cannot be found, digital quantum simulation methods still allows the simulation for more general $\bb{H}$. In this case, one counts resources as the number of queries to some given black-boxes and the number of two-qubit gates needed. This is referred to as the query and gate complexities respectively. 

Let $s$ be the sparsity of $\bb{H}$ (maximum number of non-zero entries in each row) and  $\|\bb{H}\|_{\text{max}}$ be its max-norm (value of largest entry in absolute value). We denote the $(i,j)^{\text{th}}$ entry to $\bb{H}$ as $H_{ij}$. A common set of black-boxes used in Hamiltonian simulation is known as the sparse access.
\begin{definition}
Sparse access to Hermitian matrix $\bb{H}$ refers to two unitary black-boxes $O_M$ and $O_F$ such that $O_M|j\rangle|k\rangle|z\rangle=|j\rangle|k\rangle|z\oplus H_{jk}\rangle$ and $O_{F} |j\rangle|l\rangle=|j\rangle|F(j,l)\rangle$. 
Here the function $F$ takes the row index $j$ and a number $l=1,2,...,s$ and outputs the column index of the $l^{\text{th}}$ non-zero elements in row $j$. 
\end{definition}
There are quantum simulation protocols in terms of query complexity that scale linearly in $t$ \cite{low2019hamiltonian} using sparse access or linearly in $t$ up to logarithmic factors \cite{berry2015hamiltonian}.
\begin{lemma} \cite{berry2015hamiltonian} ~\label{eq:lemmasimulation}
Let $\tau=s t\|\bb{H}\|_{\text{max}}$. Then $\exp(-i\bb{H}t)$ acting on $m_H$ qubits can be simulated to within error $\varepsilon$ has query complexity $\mathcal{O}(\tau \log (\tau/\varepsilon)/(\log\log (\tau/\varepsilon)))$ with gate complexity 
$\mathcal{O}\Big(  \tau ( m_H + \log^{2.5}(\tau/\varepsilon) )\log (\tau/\varepsilon)/(\log\log (\tau/\varepsilon)) \Big)$.
 \end{lemma}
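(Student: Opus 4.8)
This is the standard sparse-Hamiltonian-simulation bound, and I would establish it via the truncated-Taylor-series-with-segments construction. \textbf{Step 1 (decompose $\bb{H}$).} Using the sparse-access oracles $O_M,O_F$, an edge-colouring of the interaction graph of $\bb{H}$, and a discretisation of the matrix entries to precision $\mathrm{poly}(\varepsilon/\tau)$, write $\bb{H}=\sum_l \alpha_l \bb{H}_l$ with each $\bb{H}_l$ unitary and Hermitian (essentially a $1$-sparse self-inverse phased permutation), each $\alpha_l\ge 0$, the number of terms $\mathrm{poly}(s,\log(\tau/\varepsilon))$, and $\sum_l\alpha_l=O(s\|\bb{H}\|_{\text{max}})$ so that $t\sum_l\alpha_l=O(\tau)$. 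The point is that a single query to the select operator $\mathrm{SEL}=\sum_l|l\rangle\langle l|\otimes\bb{H}_l$ costs only $O(1)$ queries to $O_M,O_F$, $O(m_H)$ two-qubit gates acting on the $m_H$-qubit state register, and $O(\mathrm{poly}\log(\tau/\varepsilon))$ gates of index/entry arithmetic.

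\textbf{Step 2 (segment, truncated Taylor series, oblivious amplitude amplification).} Partition $[0,t]$ into $r=\Theta(\tau)$ segments so that each carries ``weight'' $\sum_l\alpha_l\,(t/r)=O(1)$, and approximate the one-segment evolution $\exp(-i\bb{H}t/r)$ by its degree-$K$ Taylor polynomial. Expanding $\bb{H}^k=(\sum_l\alpha_l\bb{H}_l)^k$ turns this polynomial into a linear combination of unitaries $\sum_\mu \beta_\mu V_\mu$, each $V_\mu$ a product of at most $K$ of the $\bb{H}_l$ and hence realisable with $O(K)$ calls to $\mathrm{SEL}$. By the choice of weight, the coefficient $\ell_1$-norm $\sum_\mu\beta_\mu$ is bounded (close to $2$); this is exactly the regime where the prepare/select/unprepare LCU lemma together with \emph{oblivious} amplitude amplification deterministically yields the target operator up to $o(1)$ error after $O(1)$ amplification rounds. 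The truncation error of one segment is $O\big((\sum_l\alpha_l t/r)^{K+1}/(K+1)!\big)$; summing over the $r$ segments and applying Stirling's bound shows $K=O\big(\log(\tau/\varepsilon)/\log\log(\tau/\varepsilon)\big)$ suffices for total error $\varepsilon$ — this is the origin of the $\log/\log\log$ factor.

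\textbf{Step 3 (resource count).} Each segment costs $O(1)$ amplification rounds times $O(K)$ invocations of $\mathrm{SEL}$ times $O(1)$ oracle queries per invocation; over $r=\Theta(\tau)$ segments this gives query complexity $O(\tau K)=O\big(\tau\log(\tau/\varepsilon)/\log\log(\tau/\varepsilon)\big)$. For gates, each $\mathrm{SEL}$ invocation additionally spends $O(m_H)$ gates on the state register, and per segment the preparation of the coefficient register plus the entry arithmetic spends $O(\log^{2.5}(\tau/\varepsilon))$ gates; multiplying through yields $O\big(\tau(m_H+\log^{2.5}(\tau/\varepsilon))\log(\tau/\varepsilon)/\log\log(\tau/\varepsilon)\big)$, as claimed.

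The heart of the argument — and the step I expect to be the main obstacle — is keeping the LCU \emph{deterministic}: one must force the $\ell_1$-norm of the Taylor coefficients down to $O(1)$ (which is precisely what necessitates $r\sim\tau$ segments) and then verify that oblivious amplitude amplification genuinely applies, i.e. that the block-encoded operator is uniformly close to an isometry on the relevant subspace across all segments. Balancing $r$, $K$ and the per-segment truncation error against one another is what pins down the optimal $K$, and hence the precise $\log/\log\log$ dependence in the final bounds.
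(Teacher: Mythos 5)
The paper does not prove this lemma at all: it is imported verbatim, by citation, from Berry--Childs--Kothari (\emph{Hamiltonian simulation with nearly optimal dependence on all parameters}), so there is no in-paper argument to compare against. Judged on its own terms, your sketch has a genuine gap, and it sits exactly where you would need it not to: in Step 1 you assert that the $1$-sparse self-inverse decomposition $\bb{H}=\sum_l\alpha_l\bb{H}_l$ can be arranged with $\sum_l\alpha_l=O(s\|\bb{H}\|_{\text{max}})$. The locally computable edge-colourings used with sparse oracle access require $O(s^2)$ colours (not $s+1$; Vizing's bound is not locally implementable from $O_F$, $O_M$ queries), so the standard decomposition gives $\sum_l\alpha_l=O(s^2\|\bb{H}\|_{\text{max}})$. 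Feeding that into your Steps 2--3 yields $\tau=s^2t\|\bb{H}\|_{\text{max}}$, which is precisely the bound the truncated-Taylor-series paper actually proves --- quadratic, not linear, in the sparsity. Your route therefore establishes a strictly weaker statement than the lemma.

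The linear-in-$s$ scaling, together with the specific $m_H+\log^{2.5}(\tau/\varepsilon)$ gate overhead, comes from a different construction: one block-encodes $\bb{H}/(s\|\bb{H}\|_{\text{max}})$ via the Childs/Szegedy quantum walk (whose step costs $O(1)$ queries and $O(m_H+\log^{2.5}(\tau/\varepsilon))$ gates, the $\log^{2.5}$ arising from the reversible arithmetic and square-root computations in the state-preparation subroutine), and then applies the LCU-plus-oblivious-amplitude-amplification machinery to a linear combination of \emph{walk steps} with coefficients drawn from the Jacobi--Anger/Bessel expansion that inverts the walk's $\arcsin$ spectral distortion --- not to the Taylor series of $\bb{H}$ itself. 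Your segmentation, $\ell_1$-norm control, and $K=O(\log(\tau/\varepsilon)/\log\log(\tau/\varepsilon))$ truncation analysis are all correct in spirit and do carry over to that setting, but the object being combined must be the walk operator rather than products of the $\bb{H}_l$, otherwise the sparsity dependence cannot be brought down to linear. Since the present paper treats the lemma as a black box, the clean fix is either to cite it as the paper does, or to replace Step 1 with the quantum-walk block-encoding.
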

Throughout the paper, we use $\tilde{\mathcal{O}}$ to denote $\mathcal{O}$ where logarithmic terms are ignored.
\section{Schr\"odingerisation of the heat equation} 

We begin with the initial value problem of the linear heat equation with a source term
\begin{equation} \label{eq:heatequation}
\partial_t u=-H u, \\ \quad 
u(0,x) = u_0(x),
\end{equation}
where $H=-\nabla_x^2+V(x)$ and $t\ge 0$. We introduce a real one-dimensional variable $p>0$ and define 
\begin{align}
w(t,x,p) = e^{-p} u(t,x).
\end{align} 
This transformation -- which we call the \textit{warped phase transformation} -- is a crucial ingredient that will allow us to transform the heat equation to Schr\"odinger equations by extending the heat equation solution to a higher `warped' extra dimension.  Such a transformation was used in \cite{golse2022quantum} for a completely different purpose: to develop  efficient quantum computing algorithms for uncertainty quantification problems in PDEs. 

First observe that one can recover the solution to the heat equation  using $u(t,x)= \int_0^\infty w(t,x,p)\d p = \int_{-\infty}^\infty \chi(p) w(t,x,p) \d p$, where $\chi(p)$ is the indicator function with $\chi(p) = 1$ for $p \geq 0$ and $\chi(p) = 0$ for $p<0$. One can also recover $u$ via
$u(t,x)=e^p W(t,x,p)$ for any $p>0$. One can extend the domain of $p$ to $ (-\infty, \infty)$, with  $w=w(t,x,p)$ satisfying $\partial_t w +  (\nabla^2_x-V(x))\partial_p w= 0$ with evenly extended initial condition $w(0,x,p)=\exp(-|p|u_0(x))$. Let $\tilde{w}=\tilde w(t,x,\eta)$ be the Fourier transform of $w$ in $p$ and $\eta\in \mathbb{R}$ be the Fourier mode. Then $\tilde{w}$ satisfies a system of \textit{uncoupled} Schr\"odinger equations 
\begin{align} \label{eq:heatschrodinger}
i \partial_t\tilde{w} = \eta (-\nabla_x^2+V(x))\tilde{w},
\end{align}
one for each $\eta$! See Appendix A for details. We call this the \textit{Schr\"odingerised heat equation}. The role of $\eta$ can be interpreted as a recaling in time with $t \rightarrow t \eta$ for the Schr\"odinger equation $i\partial_t \tilde{w}=(-\nabla^2_x+V(x))\tilde{w}$. The rescaled time remains real-valued. 

To solve these equations numerically, we discretise the system in $x$ and $p$, but not in $t$. We choose uniform mesh sizes $\Delta x = 2/M$ for the position variable in each dimension, $\Delta p = 2L/N$ for the $p$ variable, $\Delta \eta=2L/N$ for the $\eta$ variable, where $M$ and $N$ are even positive integers and $L>0$. We introduce the vector $\bb{w}(t)=\bb{w}_{[x]}\otimes \bb{w}_{[p]}$ where the elements of the vector $\bb{w}_{[x]}$ are $w(t, x_i, p)$ with $i=1,...,M^d$ labelling the $x$ grid points and the elements of  $\bb{w}_{[p]}$ are $w(t,x,p_j)$ with $p_j=j \Delta p$, $j=-N/2,...,N/2$. We define $\tilde{\bb{w}}=(\mathbf{1}^{\otimes M^d}\otimes \mathcal{F}_p)\bb{w}$ where $\mathcal{F}_p$ is the discrete Fourier transform with respect to variable $p$. Then the discretisation of the Schr\"odingerised heat equation in Eq.~\eqref{eq:heatschrodinger} becomes the following system of $M^dN$ ODEs  
\begin{align} \label{eq:schrodinger}
i\frac{d}{dt} \tilde{\bb{w}}(t)=(\bb{H}\otimes \bb{D})\tilde{\bb{w}}(t)=\bb{H}_{\text{total}}\tilde{\bb{w}}(t).
\end{align}
where $\bb{H}_{\text{total}}$ is a Hermitian matrix. Here $\bb{H}=(-\bb{P}^2_1-...-\bb{P}^2_d+\bb{V})$ is also Hermitian, originating from the discretisation of the Schr\"odinger Hamiltonian $H$ in Eq.~\eqref{eq:heatequation}. We define the matrices $\bb{P}_l=\mathbf{1}^{\otimes l-1}\otimes P_{l}\otimes \mathbf{1}^{\otimes d-l}$, $l=1,...,d$,  where $P_{l}$ is the discretisation of the momentum operator $-i\partial_{x}$ with respect to the $l^{\text{th}}$ spatial variable. The matrix $\bb{V}=\text{diag}(V(x_1),...,V(x_{M^d}))$ is a diagonal matrix with entries $V(x_i)$. The matrix $\bb{D}=\text{diag}(\mu_1,...,\mu_N)$ is also a diagonal matrix, with entries $\mu_j=\pi(j-N/2)$.  

It is clear that  $\tilde{\bb{w}}(t)=\exp(-i\bb{H}_{\text{total}}t)\tilde{\bb{w}}(0)$ and $\bb{H}_{\text{total}}$, being Hermitian, can be interpreted as a time-independent Hamiltonian. We can define the quantum state $|\tilde{w}(t)\rangle=(1/\|\tilde{\bb{w}}(t)\|)\sum_{i=1}^{M^d} \sum_{j=-N/2}^{N/2} \tilde{w}(t, x_i, \eta_j)|i,j\rangle$ where $\|\tilde{\bb{w}}(t)\|^2=\sum_{i=1}^{M^d}\sum_{j=-N/2}^{N/2}|\tilde{w}(t, x_i, \eta_j)|^2$ is the $l_2$ norm squared of $\tilde{\bb{w}}(t)$. 
From Eq.~\eqref{eq:schrodinger} one can recover the quantum state  with input quantum state $|\tilde{w}(0)\rangle$ through $|\tilde{w}(t)\rangle=\exp(-i \bb{H}_{\text{total}}t) |\tilde{w}(0)\rangle$. By applying an inverse quantum Fourier transform $\mathcal{F}_p^{-1}$, with respect to $p$, onto the second register we obtain $|w(t)\rangle=(\mathbf{1}^{\otimes M^d} \otimes \mathcal{F}^{-1}_p)|\tilde{w}(t)\rangle$, where $|w(t)\rangle=(1/\|\bb{w}(t)\|)\sum_{i=1}^{M^d}\sum_{j=1}^N w(t, x_i, p_j)|i, j\rangle$ and $\|\bb{w}(t)\|=\|\tilde{\bb{w}}(t)\|$. From $|w(t)\rangle$ one can recover the quantum state of $u$  whose entries are proportional to the solutions of the heat equation $|u(t)\rangle=(1/\|\bb{u}(t)\|)\sum_{i=1}^{M^d}u(t, x_i)|i\rangle$. We can do this  by  either projecting $|w(t)\rangle$ onto $\mathbf{1} \otimes \sum_{k=N/2}^N |k\rangle \langle k|$ (projecting only onto $p>0$), or using amplitude amplification to boost the chance of retrieving $|u(t)\rangle$ to probability $\sim \|\bb{u}(0)\|/\|\bb{u}(t)\|$.

 We call this the \textit{Schr\"odingerisation approach} to prepare $|u(t)\rangle \propto \exp(-\bb{H}t)|u(0)\rangle$, where $|u(t)\rangle$ evolves with respect to $\bb{H}$ instead of $\bb{H}_{\text{total}}$. The above method is in fact applicable to {\it any} Hermitian $\bb{H}$, with sparsity $s$ and max-norm $\|\bb{H}\|_{\text{max}}$. We extend to non-Hermitian $\bb{H}$ in the last section.

\begin{theorem} \label{thm:heatequation} Given sparse-access to a $D \times D$ Hermitian matrix $\bb{H}$ and the unitary $U_{initial}$ that prepares the initial quantum state $|u(0)\rangle=U_{initial}|0\rangle$. With the Schr\"odingerisation approach, the state $|u(t)\rangle$ can be prepared to precision $\epsilon$ with query and gate complexity $\tilde{\mathcal{O}}((\|\bb{u}(0)\|/\|\bb{u}(t)\|)st\|\bb{H}\|_{\text{max}}/\epsilon)$.
\end{theorem}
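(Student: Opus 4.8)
The plan is to convert the chain of reductions already in place -- heat equation $\to$ warped phase transformation $\to$ Fourier transform in $p$ $\to$ the discretised Schr\"odinger system $i\,d\tilde{\bb w}/dt = \bb H_{\text{total}}\tilde{\bb w}$ with $\bb H_{\text{total}} = \bb H\otimes\bb D$ -- into an explicit quantum circuit, bound its error, count its queries and gates, and optimise over the two discretisation parameters: the truncation radius $L$ of the $p$-domain and the number $N$ of grid points for $p$ (equivalently for $\eta$).

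First I would assemble the circuit from four blocks. (i) \emph{State preparation}: apply $U_{initial}$ on the $x$-register to obtain $|u(0)\rangle$, and on the $\log_2 N$-qubit ancilla prepare the \emph{fixed}, $\bb H$-independent profile with amplitudes $\propto e^{-|p_j|}$ followed by the quantum Fourier transform $\mathcal F_p$ (equivalently, directly prepare the state with amplitudes $\propto 2/(1+\eta_j^2)$); being one-dimensional on few qubits, its cost is subdominant. (ii) \emph{Hamiltonian simulation} of $\exp(-i\bb H_{\text{total}}t)$ via Lemma~\ref{eq:lemmasimulation}: the key structural point is that, $\bb D$ being diagonal and explicitly known, sparse access to $\bb H_{\text{total}}=\bb H\otimes\bb D$ reduces to sparse access to $\bb H$ with $\mathcal O(1)$ overhead, with $s_{\text{total}}=s$ and $\|\bb H_{\text{total}}\|_{\text{max}} = \|\bb H\|_{\text{max}}\max_j|\mu_j| = \|\bb H\|_{\text{max}}\,\eta_{\max}$. (iii) Apply $\mathcal F_p^{-1}$, a QFT costing $\tilde{\mathcal O}(1)$ gates on the ancilla. (iv) Post-select the ancilla onto $p_j>0$ (here $\bb H\succeq 0$; for a general $\bb H$ one would project onto $p_j\gtrsim t\|\bb H\|$), boosted by amplitude amplification.

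For the error budget, three contributions enter: truncation of $p$ to $[-L,L]$; truncation and sampling of the Fourier variable $\eta$ (finiteness of $N$); and the simulation error $\varepsilon$ of Lemma~\ref{eq:lemmasimulation}. Solving the $p$-transport equation mode-by-mode in the eigenbasis of $\bb H$ ($\bb H\phi_k=\lambda_k\phi_k$, $u_0=\sum_k c_k\phi_k$) gives the exact evolved profile $w(t,x,p)=\sum_k c_k\phi_k(x)\,e^{-|p+\lambda_k t|}$; hence it stays exponentially localised, and $L = \tilde{\mathcal O}(t\|\bb H\| + \log(1/\epsilon))$ both makes the $|p|>L$ tail negligible and prevents any bump from wrapping around the torus into the $p>0$ region (note $L$ enters only logarithmically, through the ancilla size, not the query count). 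Since the initial profile has transform $2/(1+\eta^2)$, sampling out to $\pm\eta_{\max}$ with $\eta_{\max}=\tilde{\mathcal O}(1/\epsilon)$ reproduces $|w(t)\rangle$, hence $|u(t)\rangle$ after post-selection, to within $\epsilon$. With $\eta_{\max}=\tilde{\mathcal O}(1/\epsilon)$ one gets $\tau = s_{\text{total}}t\|\bb H_{\text{total}}\|_{\text{max}} = \tilde{\mathcal O}(st\|\bb H\|_{\text{max}}/\epsilon)$, so block (ii) costs $\tilde{\mathcal O}(\tau)$ queries and gates by Lemma~\ref{eq:lemmasimulation} (take $\varepsilon\sim\epsilon$; $m_H=\log_2(DN)$ is logarithmic). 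Finally, post-selection succeeds with probability $\Theta(\|\bb u(t)\|^2/\|\bb u(0)\|^2)$: the Schr\"odinger evolution and $\mathcal F_p$ are unitary and $\|e^{-|p|}\|_{L^2}$ is a fixed constant, so $\|\tilde{\bb w}(t)\|=\|\bb w(0)\|=\Theta(\|\bb u(0)\|)$, while the $p>0$ component equals $|u(t)\rangle$ tensored with a fixed vector of $\Theta(\|\bb u(t)\|)$-norm; thus amplitude amplification needs $\mathcal O(\|\bb u(0)\|/\|\bb u(t)\|)$ repetitions of (i)--(iii) and their inverses, and it does not inflate the $\epsilon$-error because after the correct number of rounds the ``good'' amplitude is $\Theta(1)$, not $\Theta(\|\bb u(t)\|/\|\bb u(0)\|)$. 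Multiplying gives query and gate complexity $\tilde{\mathcal O}\big((\|\bb u(0)\|/\|\bb u(t)\|)\,st\|\bb H\|_{\text{max}}/\epsilon\big)$.

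The main obstacle is the $p$-side error analysis: rigorously propagating into $|u(t)\rangle$ the combined effect of (a) truncating $p$ to $[-L,L]$, (b) the periodic wrap-around implicit in using a DFT in $p$, and (c) sampling $\eta$ on a finite grid, and in particular pinning the scaling $\eta_{\max}=\tilde{\mathcal O}(1/\epsilon)$ that feeds $\|\bb H_{\text{total}}\|_{\text{max}}$ and hence $\tau$. The non-smoothness of $e^{-|p|}$ at $p=0$ limits the decay of its transform to $\mathcal O(1/\eta^2)$, so this step needs real care (or a smoother warping profile, traded off against slower decay in $p$); by contrast the reduction to $\bb H\otimes\bb D$, the invocation of Lemma~\ref{eq:lemmasimulation}, the success-probability estimate, and the amplitude-amplification bookkeeping are comparatively routine.
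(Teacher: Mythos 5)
Your proposal follows essentially the same route as the paper's Appendix~B: simulate $\bb{H}_{\text{total}}=\bb{H}\otimes\bb{D}$ via Lemma~\ref{eq:lemmasimulation} with $\|\bb{D}\|_{\text{max}}=\mathcal{O}(N)=\mathcal{O}(1/\epsilon)$, apply $\mathcal{F}_p^{-1}$, project onto $p>0$, and amplify with $\mathcal{O}(\|\bb{u}(0)\|/\|\bb{u}(t)\|)$ rounds, arriving at the same bound. The extra details you supply (the exact profile $e^{-|p+\lambda_k t|}$, the choice of $L$, the caveat about shifting the projection window for indefinite $\bb{H}$) go somewhat beyond the paper's argument, and the $p$-side discretisation error you flag as the main obstacle is indeed the step the paper itself treats only at the level of asserting $N=\mathcal{O}(1/\epsilon)$.
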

\begin{proof}
See Appendix B.
\end{proof}

\subsection{Preparing quantum ground states and Gibbs states}
Suppose one wants to prepare a $D$-dimensional ground state $|E_0\rangle$ from a given state $|u(0)\rangle=\sum_{j=0}^{D-1} \alpha_j|E_j\rangle$, where $\alpha_j \in \mathbb{C}$, $\|\bb{u}(0)\|=1$ and $\{|E_j\rangle\}$ are the non-degenerate orthonormal eigenstates of a (Hermitian) $D \times D$ Hamiltonian $\bb{H}$. $\bb{u}(t)$ is the vector whose entries are the amplitudes of $|u(t)\rangle$. If one evolves $\bb{u}(t)=\exp(-\bb{H}t)\bb{u}(0)$ according to $\partial_t \bb{u}=-\bb{H}\bb{u}$ with initial condition $\bb{u}(0)$, then one can write $|u(t)\rangle=(1/\|\bb{u}(t)\|)\sum_{j=0}^{D-1} \alpha_j \exp(-E_j t)|E_j\rangle$. Then $|u(t)\rangle=\exp(-\bb{H}t)|u(0)\rangle$ where $\bb{H}$ is a  $D \times D$ Hermitian matrix with sparsity $s$ and max-norm $\|\bb{H}\|_{\text{max}}$. Assuming the presence of a non-zero spectral gap $\Delta=E_1-E_0>0$, the convergence to the ground state $|u(t)\rangle \rightarrow |E_0\rangle$ is exponentially fast. If $t_{\text{final}}$ denotes the time where the quantum fidelity between the target ground state $|E_0\rangle$ and the prepared state $|u(t)\rangle$ be $F(|E_0\rangle, |u(t)\rangle)=1-\epsilon$ for a small $\epsilon>0$, then the dominating contribution to $t_{\text{final}}$ is $\mathcal{O}((1/\Delta)\ln(1/(\epsilon |\alpha_0|^2)))$. Combining this with Theorem~\ref{thm:heatequation}, one sees the total query and gate complexity costs in preparing the ground state to fidelity $1-\epsilon$ for $\epsilon \ll 1$ is $\tilde{\mathcal{O}}(s\|\bb{H} \|_{\text{max}}/(|\alpha_0|\Delta \epsilon))$. Here the scaling in $|\alpha_0|$ and $\Delta$ is comparable to non-heuristic schemes like quantum phase estimation \cite{ge2019faster} and near-optimal lower bounds for ground state preparation \cite{lin2020near}, up to logarithmic factors. The drawback is that this scheme has an extra factor $1/\epsilon$ compared to the near-optimal schemes. This originates from $\|\bb{H}\otimes \bb{D}\|_{\text{max}}\sim \|\bb{H}\|_{\text{max}}/\epsilon$. It will be interesting to see if improvements to this factor can be achieved. For details see Appendix C. 

To create the Gibbs state at temperature $T$ corresponding to the same $D \times D$ Hamiltonian $\bb{H}$, we can use the Schr\"odingerisation approach to prepare the normalised pure state $|\Psi(\beta)\rangle=\sum_k \sqrt{\exp(-\beta E_k)/Z}|E_k E_k\rangle$, where $Z=\text{Tr}(\exp(-\beta \bb{H}))$ is the partition function for $\bb{H}$ with $\beta=1/(k_BT)$ and $T$ is the temperature. Then by tracing out one register we obtain the Gibbs state $\rho_{Gibbs}(\beta)=\sum_{j=0}^{D-1} \exp(-\beta E_j)|E_j\rangle \langle E_j|=\text{Tr}_1(|\Psi (\beta)\rangle \langle \Psi (\beta)|)$. This quantum simulation method differs from previous methods \cite{poulin2009sampling} that need to make use of quantum phase estimation to prepare $|\Psi(\beta)\rangle$. We rewrite $|\Psi (\beta)\rangle \propto \exp(-(\beta/2)(\bb{H}\otimes \mathbf{1})\sum_{j=0}^{D-1}|E_j E_j\rangle$ where we can identify $|u(t)\rangle=|\Psi(\beta)\rangle$ by the application of $\exp(-(\bb{H} \otimes \mathbf{1}) t)$, with time $t=\beta/2$, onto the state $|u(0)\rangle=(1/\sqrt{D})\sum_{j=0}^{D-1} |E_j E_j\rangle$. The latter we assume is given. This is equivalent to performing the simple extension $\bb{H} \rightarrow \bb{H}\otimes \mathbf{1}$ in Eq.~\eqref{eq:schrodinger}, for any Hermitian $\bb{H}$. Now $\|u(t)\|=\sqrt{Z}$ and $\|u(0)\|=1/\sqrt{D}$ where $D$ is the total number of energy levels. Then a straightforward application of Theorem~\ref{thm:heatequation} means we can prepare $\rho_{Gibbs}(\beta)$ to precision $\epsilon$ with query and gate complexity $\tilde{\mathcal{O}}(s\|\bb{H}\|_{\text{max}} \beta \sqrt{D/Z}/\epsilon)$. Heuristic methods aside, this coincides with the best-known scaling, to our knowledge, with respect to $D$ and $Z$ \cite{poulin2009sampling, gilyen2019quantum}. 

\section{Schr\"odingerisation of linear transport equation}

An interesting application of solving the linear transport equation (non-unitary dynamics) for a quantum problem is in simulating quantum states in random media in the semi-classical regime. Here we are interested in the evolution of a quantum system -- described by the Schr\"odinger equation -- in the presence of a random potential $U(x)=U_0(x)+U_1(x/\zeta)$. Here $U_0(x)$ is a deterministic, slowly moving background potential and $U_1$ is the random fluctuation depending on $x/\zeta$, the fast spatial variable, and $R(|x-y|)=\langle U_1(x)U_2(y)\rangle$ is the covariance of the fluctuations. We assume homogeneity and isotropy of the medium, where the differential scattering cross-section $\sigma(k,k') \propto R(k-k')\delta(k^2-(k')^2)$ with $\tilde{R}(|k|)$ being the power spectrum and $\Sigma(k)$ denotes the total scattering cross-section. In the semiclassical limit, the Wigner transform of the wave function will converge to the particle probability density function  $W=W(t, x, k)$
governed by  the linear transport equation \cite{ryzhik1996transport}
\begin{align} \label{eq:lineartranswigner}
\partial_t W+k \cdot \nabla_x W=\int dk'\sigma(k,k')W(t,x,k')-\Sigma(k) W.
\end{align}
By Schr\"odingerising this equation, we can prepare, to precision $\epsilon$, the quantum state $|W(t)\rangle$ whose amplitudes are proportional $\bb{W}(t)$, which is the vector formed by taking $W$ at mesh points $(x_i, k_j)$ formed by discretising $x$ and $k$. Following the Schr\"odingerisation approach (see Appendix D for details), the corresponding $\bb{H}_{\text{total}}$ for this problem is $\bb{H}_{\text{total}}=\bb{L}_{[\xi, k]} \otimes \bb{1}-\mathbf{1}\otimes \bb{\Sigma} \otimes \bb{D}+\mathbf{1}\otimes \bb{\sigma} \otimes \bb{D}$, where $\bb{\sigma}$ and $\bb{\Sigma}$ are the differential and total cross-section matrices. Matrix $\bb{L}_{[\xi, k]}$ has entries corresponding to $\xi \cdot k$, where $\xi$ is the Fourier mode of $x$ and arises from discretising  $k \cdot \nabla_x W$, which is a fundamental part of the transfer equation in Eq.~\eqref{eq:lineartranswigner}. This part exists whether or not the Schr\"odingerisation approach is used. Here $\|\bb{L}_{[\xi, k]}\|_{\text{max}} \sim \|\bb{\sigma}\otimes \bb{D}\|_{\text{max}} \sim \mathcal{O}(1/\epsilon)$ where $\bb{\sigma}$ and $\bb{\Sigma}$ are order one factors. Since the warped transformation only adds one derivative in $p$ in the scattering term, and the new transformed equation remains first order, then the order of  $\|\bb{H}_{\text{total}}\|_{\text{max}}$ {\it remains the same order in $1/\epsilon$} as the original transport equation. This means that the Schr\"odingerisation approach does not give rise to an extra $O(1/\epsilon)$ factor compared to the original transport equation. Then to retrieve $|W(t)\rangle$, it can be shown that the query and gate complexities are respectively $\sim \tilde{\mathcal{O}}((\|\bb{W}(0)\|/\|\bb{W}(t)\|)s(\bb{\sigma})/\epsilon)$ and $\sim \tilde{\mathcal{O}}((\|\bb{W}(0)\|/\|\bb{W}(t)\|)ds(\bb{\sigma})/\epsilon)$, where $s(\bb{\sigma})$ is the sparsity of the differential cross-section matrix. This approach can also be applied to find the stationary state for $|W(t)\rangle$. 

Here $d \geq 3$ and refers to the  spatial dimension for Schr\"odinger's equation. It is also possible to interpret this in terms of multiple $n=d/3$ particles each moving in three spatial dimensions so a high $d$ limit is a large $n$ limit. See Appendix D. 

While $|W(t)\rangle$ not the usual output of a quantum simulator, which outputs directly a wavefunction $|\psi(t)\rangle$, it is still possible to recover the semiclassical limit of the physical observables of $|\psi(t)\rangle$ by taking the moments of $W$. Namely, any expectation value $\langle \hat{G}\rangle=\langle \psi(t)|\hat{G}|\psi(t)\rangle=\iint dx dk W(t,x,k) g(x, k)$, where  $g(x,k)=1, k, k^2/2$, gives mass, momentum and energy respectively.  Given access to the quantum state $|g(x,k)\rangle$ whose amplitudes are proportional to $g(x,k)$ at mesh points $(x_i, k_j)$, $\langle \hat{G}\rangle$ is then proportional to the quantum fidelity between $|g(x,k)\rangle$ and $|W(t,x,k)\rangle$. This fidelity can be recovered in different ways, for instance, via a quantum swap-test \cite{buhrman2001quantum}.

\section{General formulation}

The Schr\"odingerisation approach can in fact be generalised to \textit{any} linear $(d+1)$-dimensional PDE  for $u(t,x)$. We study the PDE that is first-order in time. A system with higher-order derivatives in $t$ can be written as an enlarged system with first-order time-derivatives by introducing a new variable $u'=\partial_t u$. Using the same discretisation scheme as before, where the entries of vector $\bb{u}(t)$ with size $M^d$ are solutions to a homogeneous linear PDE on the grid points $(t, x_i)$, we are left with a system of linear ODEs
\begin{align}
d\bb{u}(t)/dt=-\bb{A} \bb{u} (t),
\end{align} 
where the matrix $\bb{A}$ is in general not Hermitian, unlike in the heat equation example. However, it is always possible to make the decomposition $\bb{A}=\bb{H}+i\bb{\bar{H}}$ in terms of Hermitian matrices $\bb{H}=(\bb{A}+\bb{A}^{\dagger})/2$ and $\bb{\bar{H}}=i(\bb{A}^{\dagger}-\bb{A})/2$. To ensure stability, we can assume $\bb{H}$ to be positive semi-definite. The heat equation example belongs to the case $\bb{\bar{H}}=0$, whereas purely Schr\"odinger dynamics corresponds to $\bb{H}=0$. A generalisation to inhomogeneous linear PDEs is possible by a straightforward dilation of $\bb{u}$ and $\bb{A}$. This is also useful for solving boundary value problems \cite{schr2} by quantum simulation. 

We can apply our warped phase transformation here to $\bb{u}(t)$ in the $p>0$ region, $\bb{v}(t,p)=e^{-p}\bb{u}(t)$, where the original solution can be recovered by using $\bb{u}(t)=\int_{-\infty}^{\infty} \chi(p)\bb{v}(t,p)dp$. Just like for the heat equation, it is possible to extend the initial condition to $p<0$ by defining $\bb{v}(0,p)=\exp(-|p|)\bb{u}(0)$ and in the $p \in (-\infty, \infty)$ region $\bb{v}(t,p)$ satisfies $\partial_t \bb{v}+\bb{H}\partial_p \bb{v}-i\bb{\bar{H}}\bb{v}=0$. Let $\tilde{\bb{v}}(t, \eta)$, $\eta \in \mathbb{R}$ be the Fourier transform of $\bb{v}$ in $p$, we obtain the generalisation of Eq.~\eqref{eq:schrodinger}
\begin{align}
i\partial_t \tilde{\bb{v}}=(\eta \bb{H}+\bb{\bar{H}})\tilde{\bb{v}}
\end{align}
which is still a system of Schr\"odinger equations, one for each $\eta$, since $\eta \bb{H}+\bb{\bar{H}}$ is Hermitian. Note this form is not, in a strict sense,  Schr\"odinger's equations per se, where the Hamiltonian is usually a Laplacian plus a potential function. However, the first order differential operator can be cast explicitly into the Schr\"odinger form by making another warped phase transformation. We omit the details here since it's not crucial for quantum simulation to be possible. As previously, we can proceed by discretising $\eta$ to obtain 
\begin{align}
i\frac{d}{dt}\tilde{\bb{v}}=(\bb{H}\otimes \bb{D}+\bb{\bar{H}}\otimes \mathbf{1})\tilde{\bb{v}}=\bb{H}_{\text{total}}\tilde{\bb{v}},
\end{align}
and the state evolves $\tilde{\bb{v}}(t)=\exp(i\bb{H}_{\text{total}}t)\tilde{\bb{v}}(0)$ with respect to unitary dynamics. Let $s \sim \max(s(\bb{H}), s({\bb{\bar{H}}}))$ be the maximum of the sparsity of $\bb{H}$ and $\bb{\bar{H}}$. The max-norm $\|\bb{H}_{\text{total}}\|_{\text{max}} \sim \max(\|\bb{H}\|_{\text{max}}/\epsilon, \|\bb{\bar{H}}\|_{\text{max}})$. For instance, in the transport equation case, the two terms in the latter expression are of the same order. Then we can simulate $|u(t)\rangle$ with the following resources, where the proof follows in the same way as Theorem~\ref{thm:heatequation} with $D=M^d$.
\begin{theorem} \label{thm:general}
Given sparse access to the $M^d \times M^d$ matrix $\bb{H}_{\text{total}}$ and the unitary $U_{initial}$ that prepares the initial quantum state $|\bb{u}(0)\rangle$ to precision $\epsilon$. With the Schr\"odingerisation approach, the state $|\bb{u}(t)\rangle$ can be prepared with query complexity $\tilde{\mathcal{O}}((\|\bb{u}(0)\|/\|\bb{u}(t)\|)st\|\bb{H}_{\text{total}}\|_{\text{max}})$ and $\tilde{\mathcal{O}}(\|\bb{u}(0)\|/\|\bb{u}(t)\|dts\|\bb{H}_{\text{total}}\|_{\text{max}})$ additional two-qubit gates.
\end{theorem}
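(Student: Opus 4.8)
The plan is to reduce Theorem~\ref{thm:general} to the machinery already established for Theorem~\ref{thm:heatequation} by treating $\bb{H}_{\text{total}}=\bb{H}\otimes\bb{D}+\bb{\bar H}\otimes\mathbf 1$ as an abstract Hermitian matrix to which Lemma~\ref{eq:lemmasimulation} applies. First I would note that $\bb{H}_{\text{total}}$ acts on $m_H=\log_2(M^d N)=d\log_2 M+\log_2 N$ qubits, and that under the stated sparse-access assumption its sparsity is $s_{\text{total}}\lesssim \max(s(\bb{H}),s(\bb{\bar H}))=s$ (each tensor factor contributes multiplicatively, but $\bb{D}$ and $\mathbf 1$ are diagonal, so the row-count is essentially inherited from $\bb{H}$ and $\bb{\bar H}$), while $\|\bb{H}_{\text{total}}\|_{\text{max}}\sim\max(\|\bb{H}\|_{\text{max}}/\epsilon,\|\bb{\bar H}\|_{\text{max}})=\|\bb{H}_{\text{total}}\|_{\text{max}}$ as already recorded in the text. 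Setting $\tau=s\,t\,\|\bb{H}_{\text{total}}\|_{\text{max}}$, Lemma~\ref{eq:lemmasimulation} gives a simulation of $\exp(-i\bb{H}_{\text{total}}t)$ to error $\varepsilon$ with query complexity $\tilde{\mathcal O}(\tau)$ and gate complexity $\tilde{\mathcal O}(\tau(m_H+\mathrm{polylog}))$; since $m_H=\tilde{\mathcal O}(d)$, the gate count carries the advertised extra factor of $d$.

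Next I would assemble the end-to-end procedure exactly as in the heat-equation case: prepare $|\tilde{\bb v}(0)\rangle$ by composing $U_{initial}$ (which prepares $|\bb u(0)\rangle$ to precision $\epsilon$) with the state that encodes the evenly-extended, exponentially-decaying initial profile in the $\eta$ register — this is a fixed $1$-dimensional state independent of the PDE and costs $\tilde{\mathcal O}(1)$ gates; then run the Hamiltonian simulation of $\exp(-i\bb{H}_{\text{total}}t)$; then apply the inverse quantum Fourier transform $\mathcal F_p^{-1}$ on the $\eta$ register (cost $\tilde{\mathcal O}(\log N)=\tilde{\mathcal O}(1/\epsilon)$ inside the log, hence absorbed); and finally project onto the $p>0$ sector, which succeeds with probability $\Theta(\|\bb u(t)\|^2/\|\bb v(t)\|^2)$. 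The key point, already used for Theorem~\ref{thm:heatequation}, is that $\|\bb v(t)\|$ is within a constant of $\|\bb u(t)\|$ (the $p$-profile contributes an $L$-independent $O(1)$ normalisation once $L$ is chosen $\sim\log(1/\epsilon)$), so amplitude amplification boosts the success probability with $\tilde{\mathcal O}(\|\bb u(0)\|/\|\bb u(t)\|)$ repetitions. Here I am using that $\|\bb v(0)\|\sim\|\bb u(0)\|$ and $\|\bb v(t)\|\sim\|\bb u(t)\|$, exactly parallel to the heat case.

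Multiplying the $\tilde{\mathcal O}(\|\bb u(0)\|/\|\bb u(t)\|)$ amplitude-amplification rounds by the per-round cost $\tilde{\mathcal O}(\tau)=\tilde{\mathcal O}(s\,t\,\|\bb{H}_{\text{total}}\|_{\text{max}})$ in queries and $\tilde{\mathcal O}(d\,s\,t\,\|\bb{H}_{\text{total}}\|_{\text{max}})$ in additional two-qubit gates yields precisely the two complexity bounds claimed in the statement. I would also check that the discretisation error in $x$, $p$, $\eta$, and the truncation of the $p$-domain to $[-L,L]$, together with the Hamiltonian-simulation error $\varepsilon$, can all be made $O(\epsilon)$ with the stated parameter choices — for the $\eta$-truncation and mesh this is where the $1/\epsilon$ inside $\|\bb{H}_{\text{total}}\|_{\text{max}}$ comes from (one needs $N\sim\mathrm{poly}(1/\epsilon)$ Fourier modes and hence $\|\bb D\|_{\text{max}}\sim 1/\epsilon$), and the analysis mirrors Appendix~B.

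The main obstacle I anticipate is not the complexity bookkeeping but the rigorous control of the $p$-domain issues: one must justify that the evenly-extended initial data $\bb v(0,p)=e^{-|p|}\bb u(0)$, after Fourier transform, discretisation, unitary evolution, inverse Fourier transform, and restriction to $p>0$, actually reconstructs $\bb u(t)=e^{-\bb H t}\bb u(0)$ up to $O(\epsilon)$ — the subtlety being that the exact solution of $\partial_t\bb v+\bb H\partial_p\bb v-i\bb{\bar H}\bb v=0$ only agrees with $e^{-p}\bb u(t)$ on a $p$-region that shrinks or shifts under the transport, so the recovery window and the value of $L$ must be chosen compatibly, and the contribution from the non-physical $p<0$ part (and from wrap-around in the discrete Fourier transform) must be shown negligible. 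This is exactly the technical heart deferred to Appendices~A and B for the heat equation, and the same estimates, now with the extra $i\bb{\bar H}\bb v$ term (which is a bounded zeroth-order perturbation and does not affect the characteristics in $p$), carry over; the statement's phrasing ``the proof follows in the same way as Theorem~\ref{thm:heatequation}'' is therefore accurate, and the proof would consist mainly of pointing to those appendices and noting the three harmless modifications ($\bb H\to\bb H_{\text{total}}$ with its new norm/sparsity, the additional $\bb{\bar H}\otimes\mathbf 1$ term, and $D=M^d$).
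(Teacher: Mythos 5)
Your proposal follows essentially the same route as the paper: the paper's entire ``proof'' of Theorem~\ref{thm:general} is the remark that the Appendix~B argument goes through verbatim with $\bb{H}\otimes\bb{D}$ replaced by $\bb{H}_{\text{total}}=\bb{H}\otimes\bb{D}+\bb{\bar{H}}\otimes\mathbf{1}$ and $D=M^d$, which is exactly what you do --- Lemma~\ref{eq:lemmasimulation} applied to $\bb{H}_{\text{total}}$, the extra factor of $d$ in the gate count coming from $m_H\sim d\log M$, and the projection onto $p>0$ plus amplitude amplification supplying the $\|\bb{u}(0)\|/\|\bb{u}(t)\|$ factor.

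One bookkeeping slip worth fixing: you state that the projection succeeds with probability $\Theta(\|\bb{u}(t)\|^2/\|\bb{v}(t)\|^2)$ and that $\|\bb{v}(t)\|\sim\|\bb{u}(t)\|$; taken together these would give a $\Theta(1)$ success probability and no amplification would be needed, contradicting your own final bound. The correct accounting (as in Appendix~B) is that the $(x,p)$-norm is conserved by the unitary evolution, so $\|\bb{v}(t)\|=\|\bb{v}(0)\|\sim\sqrt{N}\,\|\bb{u}(0)\|$ for \emph{all} $t$ --- it is pinned to $\|\bb{u}(0)\|$, not to $\|\bb{u}(t)\|$ --- while the projected amplitude carries an extra factor $\|\exp(-p)\|\sim\sqrt{N}$, so the success probability is $N\|\bb{u}(t)\|^2/\|\bb{v}(t)\|^2\sim\bigl(\|\bb{u}(t)\|/\|\bb{u}(0)\|\bigr)^2$ and amplitude amplification costs $\tilde{\mathcal{O}}(\|\bb{u}(0)\|/\|\bb{u}(t)\|)$ rounds. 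With that correction your argument coincides with the paper's.
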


\section{Discussion}

We have introduced a conceptually new method, called Schr\"odingerisation, which makes it possible to simulate solutions of any linear partial differential equations using quantum simulation. This method is framed in the traditional language of dynamical equations. The connection between classical dynamics and its corresponding Schr\"odinger's equations can be seen more easily at the level of the dynamical equations without more sophisticated methods. 

For more in-depth technical details, discussions and applications to other linear PDEs like Liouville, Fokker-Planck, Vlaslov-Fokker-Planck, Black-Scholes equations and nonlinear ODEs, please see our technical companion paper \cite{schr2}. 
\section{Acknowledgements}
SJ was partially supported by the NSFC grant No.~12031013, the Shanghai Municipal Science and Technology Major Project (2021SHZDZX0102), and the Innovation Program of Shanghai Municipal Education Commission (No. 2021-01-07-00-02-E00087).  NL acknowledges funding from the Science and Technology Program of Shanghai, China (21JC1402900). 
\bibliography{QHS}

\appendix 

\section{Appendix A: Warped phase transformation} \label{app: Appendix A}

Here we define in the $p>0$ region 
\[w(t,x,p) = \e^{-p} u(t,x), \qquad p>0.\]
A simple calculation shows
\begin{equation}\label{heatreformulation}
\partial_t w + \partial_p \nabla_x w = 0,  \qquad p>0.\\
\end{equation}
We can recover $u$ using $w$ via
\begin{equation}\label{integration}
  u(t,x)= \int_0^\infty w(t,x,p)\d p = \int_{-\infty}^\infty \chi(p) w(t,x,p) \d p,
  \end{equation}
where $\chi(p) = 1$ for $p>0$ and $\chi(p) = 0$ for $p<0$. Alternatively, one can also choose any $p_*>0$ and let
\begin{equation} \label{point}
  u(t,x) = \e^{p_*} w(t,x, p_*).
\end{equation}
Now we want to see that no boundary condition is required at $p=0$. Assume $x$ is defined in a periodic domain and we apply the Fourier transform on $w$ with respect to $x$, denoted $\hat{w}(t,\xi,p)$, with Fourier modes $\xi=(\xi_1,...,\xi_d)^T$. One then arrives at a convection equation $\partial_t \hat{w}- |\xi|^2 \partial_p \hat{w} = 0$, where $|\xi|^2 = \xi_1^2 + \cdots + \xi_d^2$. Here the solution $\hat{w}$ moves from the right to the left, so no boundary condition is needed at $p = 0$.

For the convenience of numerical approximation,  we extend the domain to $p<0$. This extension will not affect the solution of  $w$ in the region $p>0$ since $w$ convects from right to the left.  Thus, we can symmetrically extend the initial data of $w$ to $p<0$ and keep Eq.~\eqref{heatreformulation}:
\begin{equation}\label{heatreformulationextend}
\begin{cases}
\partial_t w +  \Delta_x \partial_p w= 0, \qquad p \in (-\infty, \infty), \\
w(0,x,p) = \e^{-|p|} u_0(x).
\end{cases}
\end{equation}
The only difference is that now the initial condition is $w(0,x,p)=\exp(-|p|)u_0(x)$ for $p \in (-\infty, \infty)$ instead of just $w(0,x,p)=\exp(-p)u_0(x)$ for $p>0$. 

The solution coincides with the solution of Eq.~\eqref{heatreformulation} when $p>0$. Due to the exponential decay of $\e^{-|p|}$ one can (computationally) impose the periodic boundary condition $w(t,x,p=-L) = w(t,x,p = L)= 0$ along the $p$-direction for some $L>0$ suitably large but finite. Then the Fourier transform on $w$ with respect to
$p$ gives
\begin{equation}\label{heat-Schro}
  \partial_t\tilde{w} -\i \eta \nabla^2_x\tilde{w}=0 \qquad \mbox{or} \qquad \i \partial_t\tilde{w} = - \eta \nabla^2_x\tilde{w},
\end{equation}
where $\tilde w(t,x,\eta)$, $\eta\in \mathbb{R}$, is the Fourier transform of $w$ in $p$.  Equation \eqref{heat-Schro} is clearly the Schr\"odinger equation, for every $\eta$!

For more details and numerical experiments on justifying the setup in Eq.~\eqref{heatreformulationextend}, see our technical companion paper \cite{schr2}.

\section{Appendix B: Proof of Theorem~\ref{thm:heatequation}} \label{app: Appendix B}

From Eq.~\eqref{eq:schrodinger} one can prepare the $\tilde{\mathcal{O}}(\log(D))$-qubit state $|\tilde{w}(t)\rangle=\exp(-i\bb{H}_{\text{total}}t)|\tilde{w}(0)\rangle$ by simulating the Hamiltonian $\bb{H}_{\text{total}}=\bb{H}\otimes \bb{D}$ to time $t$. Here the results hold so long as $\bb{H}$ is Hermitian. Let $s$ and $\|\bb{H}\|_{\text{max}}$ be the sparsity and max-norm of a more general $D \times D$ Hermitian matrix $\bb{H}$. 

Then, given $|\tilde{w}(0)\rangle$, one can apply Lemma~\ref{eq:lemmasimulation}, so that one prepares $|\tilde{w}(t)\rangle$ with query and gate complexity $\tilde{\mathcal{O}}(s\|\bb{H}_{\text{total}}\|_{\text{max}}t)$, where $\tilde{\mathcal{O}}$ denotes that logarithmic terms, like $\log (D)$ and $\log(1/\epsilon)$, are ignored. Now $\|\bb{H}_{\text{total}}\|_{\text{max}}=\|\bb{H}\|_{\text{max}}\|\bb{D}\|_{\text{max}}$, where $\|\bb{D}\|_{\text{max}}=\mathcal{O}(N)$ by definition and $N=\mathcal{O}(1/\epsilon)$. Thus $\tilde{\mathcal{O}}(st\|\bb{H}_{\text{total}}\|_{\text{max}})=\tilde{\mathcal{O}}(st\|\bb{H}\|_{\text{max}}/\epsilon)$.

From $|\tilde{w}(t)\rangle$ that results from the Hamiltonian simulation problem, 
the simplest way to prepare $|u(t)\rangle$ from $|\tilde{w}(t)\rangle$ is to project $|\tilde{w}(t)\rangle$ with operator $\hat{P}=\mathbf{1}\otimes \sum_{k=N/2}^N |k\rangle \langle k|$ (which projects only onto $p>0$ states). This leads to 
\begin{align}
\hat{P} |w(t)\rangle=\frac{\|u(t)\|\|\exp(-p) \|}{\|w(t)\|}|\exp(-p)\rangle|u(t)\rangle
\end{align}
where $|\exp(-p)\rangle=(1/\|\exp(-p)\|)\sum_{k=N/2}^N \exp(-p_k)|k\rangle$ with normalisation constant $\|\exp(-p)\|^2=\sum_{k=N/2}^N \exp(-2p_k) \sim \mathcal{O}(N \int_0^{\infty}\exp(-2p) dp)=\mathcal{O}(N)$. Thus a simple projection retrieves $|u(t)\rangle$ with probability $(\|u(t)\|\|\exp(-p)\|/\|w(t)\|)^2\sim N (\|u(t)\|/\|w(t)\|)^2$. We also used $\|\tilde{w}(t)\|=\|w(t)\|$. 

One can also use amplitude amplification to boost this probability to $\sqrt{N}\|u(t)\|/\|w(t)\|$ in the usual manner, by assuming access to the oracle $Q=-S_wS_p$ where $S_w=\mathbf{1}-|w(t)\rangle \langle w(t)|$ and $S_p=\mathbf{1}-2\hat{P}$ \cite{brassard2002quantum}. This means $\tilde{\mathcal{O}}(\|w(t)\|/(\sqrt{N}\|u(t)\|))$ queries to $Q$ is sufficient. Combining this with Lemma~\ref{eq:lemmasimulation}, we see that this requires a total query complexity of $\tilde{\mathcal{O}}(st\|\bb{H}\|_{\text{max}} \|w(t)\|/(\sqrt{N}\|u(t)\|))$. 

It can be shown, using the quadrature rule, that $(1/(M^d N))\|w(t)\|^2 \approx \iint_{-\infty}^{\infty} dx dp |w(t,x,p)|^2=\int_{-\infty}^{\infty}|u(0)|^2 \approx (1/M^d)\|u(0)\|^2$. Thus $\|w(t)\|/(\sqrt{N}\|u(t)\|)=\|u(0)\|/\|u(t)\|$.

In the above, we assumed access to $|\tilde{w}(0)\rangle$, while it is more natural to assume access to $|u(0)\rangle$. It is straightforward to obtain the initial state $|\tilde{w}(0)\rangle$ from $|u(0)\rangle$. Since $w(0,x,p)=\exp(-|p|)u(0,x)$, then $|w(0)\rangle=|u(0)\rangle |\exp(-p)\rangle$, where we assume we can prepare the state $|\exp(-p)\rangle$. Then applying the quantum Fourier transform to $|w(0)\rangle$, with respect to $p$, gives $|\tilde{w}(0)\rangle$. To perform an inverse Fourier transform, only $\tilde{\mathcal{O}}(1)$ two-qubit gates are required, since we have only $\log(N)$ qubits in the $p$-variable. The number of qubits in $|\tilde{w}(t)\rangle$ is $\log(DN)$. 

Thus, when given $|u(0)\rangle$, the total query and gate complexity to prepare $|u(t)\rangle$ is $\tilde{\mathcal{O}}((\|u(0)\|/\|u(t)\|)st\|\bb{H}\|_{\text{max}}/\epsilon)$.

\section{Appendix C: Preparation of quantum ground state via  Schr\"odingerisation} \label{app: Appendix C}
Given a (Hermitian) Hamiltonian $\bb{H}$, an important question is how to prepare its ground state. If given a quantum state $|u(0)\rangle$, where $\|u(0)\|=1$, then unitary evolution with respect to a time-independent $\bb{H}$ for time $\tau$ gives $|u(\tau)\rangle=\exp(-i\bb{H}\tau)|u(0)\rangle$. Here it is not clear at what time $t$ one might be able to retrieve the ground state, if at all. So an extension to non-unitary evolution would be helpful.

One method is the imaginary time evolution approach, which maps $\tau \rightarrow t=i \tau$. In this case, one can write the non-unitary evolution as 
\begin{align} \label{eq:heatsolutionnormalised}
    |u(t)\rangle=\frac{e^{-\bb{H}t}}{\|u(t)\|}|u(0)\rangle.
\end{align}
Since this is now non-unitary evolution $\|u(t)\| \neq \|u(t=0)\|$. However, there is now no direct way to transform the solution $|u(\tau)\rangle$ into Eq.~\eqref{eq:heatsolutionnormalised} since there is no $\tau \rightarrow t$ conversion that can be used once the solutions are obtained. The two solutions will only agree in the stationary limit $\partial u/\partial t=0$. One can instead use an indirect method \cite{motta2020determining}, where for each $t$, one finds the corresponding Hamiltonian evolving with $\tau$, by making tomographic measurements of $|u\rangle$ at every small time-step. We seek a more direct method, without intervening the quantum system during the process, that still allows us to perform this non-unitary evolution. 

What we lose in unitarity of evolution in Eq.~\eqref{eq:heatsolutionnormalised}, we gain in the simplicity of seeing ground states emerge. Let the orthonormal eigenfunctions of $\bb{H}$ be $\{E_j\rangle\}$ for $j=1,...,D$. We can thus decompose $|u(0)\rangle$ 
\begin{align}
    |u(0)\rangle=\sum_{j=0}^{D-1} \alpha_j |E_j\rangle, \quad \alpha_j \in \mathbb{C}
\end{align}
with corresponding eigenvalues $E_j$ where $\sum_{j=0}^{D-1} |\alpha_j|^2=1$. Then it is clear, if we evolve non-unitarily, 
\begin{align}
    |u(t)\rangle=\frac{e^{-\bb{H}t}}{\|u(t)\|}|u(0)\rangle=\frac{1}{\|u(t)\|)}\sum_{j=0}^{D-1} \alpha_j e^{-E_j t}|E_j\rangle
\end{align}
This means that for low eigenenergies, the decay is the slowest and $|E_0\rangle$ would emerge as $t\rightarrow \infty$ for gapped Hamiltonians. Furthermore, the convergence rate toward the ground state is exponential with the decay rate of $\exp((E_0-E_1)t)$, if there is a spectral gap, i.e., $E_1>E_0$.

Here we can use Schr\"odingerisation to obtain the state $|u(t)\rangle \propto \exp(-\bb{H}t)|u(0)\rangle$. To specify how long $t$ we need to evolve, let $t_{\text{final}}$ be the minimum evolution time so that quantum fidelity between the prepared $|u(t_{\text{final}})\rangle$ state and the true ground state $|E_0\rangle$ is greater or equal to $1-\epsilon$, i.e., 
\begin{align} \label{eq:fidelitycondition}
|\langle u(t_{\text{final}})|E_0\rangle|^2=\frac{|\alpha_0|^2e^{-2E_0 t_{\text{final}}}}{\|u(t_{\text{final}})\|^2}\geq 1-\epsilon
\end{align}
for some $\epsilon \ll 1$. For a normalised initial state $\|u(0)\|=1$ we have
\begin{align} \label{eq:psitaunorm}
    \|u(t)\|=\|e^{-\bb{H}t}|u(0)\rangle\|.
\end{align}
thus $\|u(t)\|^2=|\alpha_0|^2\exp(-2E_0 t)(1+(|\alpha_1|/|\alpha_0|)^2 \exp(-2\Delta t))+L$ where $0<\Delta=E_1-E_0$ is the spectral gap and we can assume $L=\sum_{k \ge 2} |\alpha_k|^2e^{-2tE_k}\ll 1$ for $t \gg 1$. Then for $\delta \ll 1$, one can insert this into Eq.~\eqref{eq:fidelitycondition} and obtain
\begin{align}
t_{\text{final}} \gtrsim (1/2\Delta)\ln(|\alpha_1|^2/(\epsilon |\alpha_0|^2) \sim (1/\Delta)\ln(1/(\epsilon |\alpha_0|^2))
\end{align}
when $|\alpha_k| \ll 1$ for all $k\geq 2$. 

Inserting $t=t_{\text{final}}$ in Theorem~\ref{thm:heatequation} and $E_0 \ll t_{\text{final}}$, one gets the total query and gate complexity $\tilde{\mathcal{O}}(s\|\bb{H}\|_{\text{max}}/(|\alpha_0|\Delta \epsilon))$ for preparing the ground state of $\bb{H}$. 

The proof for Gibbs state preparation proceeds similarly, except there $t \propto 1/T$ where $T$ is the temperature of the Gibbs state. 

\section{Appendix D: Transport equation and quantum systems in random media} \label{app: Appendix D}

Energy conservation for general wave problems is given by the transport equation for the scalar energy density 
\begin{align} \label{eq:transport}
\partial_t f + \nabla_{k}\omega \cdot \nabla_x f-\nabla_x\omega \cdot \nabla_{k}f=S,
\end{align}
with initial condition $f(0,x,k) = f_0(x,k)$, 
where $f = f(t,x,k)$ is the scalar energy density and $\omega=\omega(x,k)$ is the frequency of the wave with wave vector $k \in \mathbb{S}^{d-1}$ at point $x \in [-1,1]^d$. We assume periodic boundary conditions are imposed. Here $S=\int dk' (\sigma(x,k, k')f(t,x,k')-\Sigma(x,k)f)$ where $\sigma=\sigma(x,k,k')$ is the differential scattering cross section and $\Sigma(x, k)=\int dk'\sigma(x,k',k)$ is the total cross-section. 

An interesting example of this equation for a quantum problem appears in the simulation of quantum states in random media in the semi-classical regime. Here we study the evolution of a quantum system in the presence of a random potential $U(x)=U_0(x)+U_1(x/\zeta)$ where $U_0(x)$ is a deterministic  slowly moving background potential, $U_1$ is the random fluctuation depending on  $x/\zeta$, the fast spatial variable.

Let $R(|x-y|)=\langle U_1(x)U_2(y)\rangle$ be the covariance of the fluctuations, which are assumed to be spatially homogeneous and isotropic, and $R(|k|)=(1/(2\pi))^d\int dk \exp(iky) R(x)$ is the power spectrum. The differential cross-section and the total cross-section in this case are respectively $\sigma(k,k')/(4\pi)=R(k-k')\delta(k^2-(k')^2)$ and $\Sigma(k)/(4\pi)=\int dk' R(k-k')\delta(k^2-(k')^2)$. For the Schr\"odinger equation, $\omega^2=k^2/2+U_0(x)=k^2/2$, we can ignore the slowly-moving background potential $U_0(x)$.

 The semiclassical or weakly coupling limit of the  Wigner function  for the quantum system in this random medium, which is the particle density distribution,  solves the transport equation in Eq.~\eqref{eq:transport} \cite{ryzhik1996transport} and for any $d \geq 3$ the above forms of the cross-sections hold (Born approximation) \cite{erdHos2000linear} and
 \begin{align} \label{eq:wignerpde} 
\partial_t W+k \cdot \nabla_x W=\int dk'\sigma(k,k')W(t,x,k')-\Sigma(k) W.
 \end{align}
We can apply the warped phase transformation $V=V(t,x,k,p)=\exp(-p)W$ to get
\begin{align} \label{eq:vequation}
\partial_t V+k\cdot \nabla_x V=-\int dk'\sigma(k,k')\partial_pV(t,x,k',p)+\Sigma(k) \partial_p V.
\end{align}
Now apply a Fourier transform with respect to \textit{both} the $x$ and $p$ variables, which have respective conjugate variables $\xi \in \mathbb{R}^d$ and $\eta \in \mathbb{R}$ and define this Fourier transform as
\begin{align}
\tilde{V}=\tilde{V}(t,\xi, k, \eta)=\int dx dp\, e^{-i\eta p}e^{-i\xi x}V(t,x,k,p).
\end{align}
Using Eq.~\eqref{eq:vequation}, it is simple to see that $\tilde{V}$ obeys
\begin{align} \label{eq:tildev}
i\partial_t \tilde{V}=(k \cdot \xi-\eta \Sigma(k))\tilde{V}+\eta \int dk'\sigma(k,k')\tilde{V}(t,\xi,k',\eta).
\end{align}
We discretise in $\xi, k$ and $\eta$ with $\Delta \xi=2/J$, $\Delta k=2/K$ and $\Delta \eta=2/N$ for positive even integers $J, K, N$. Let $\bb{\tilde{V}}(t)=\bb{\tilde{V}}_{[\xi]}\otimes \bb{\tilde{V}}_{[k]}\otimes \bb{\tilde{V}}_{[p]}$ be the vector formed by the discretisation of $\tilde{V}$ in $\xi, k$ and $\eta$. Here each element of $\bb{\tilde{V}}_{[x]}$ denotes $\tilde{V}(t,\xi_i, k,\eta)$ with $i=1,...,J^d$ labelling the $\xi$ grid points, each element of $\bb{\tilde{V}}_{[k]}$ is $\tilde{V}(t,\xi,k_j,\eta)$ with $j=1,...,K^d$ labelling the $k$ grid points and the elements of $\bb{\tilde{V}}_{[\eta]}$ are $\tilde{V}(t,\xi,k,\eta_k)$ with $k=1,...,N$. One can then use the quadrature rule for the last term of Eq.~\eqref{eq:tildev} to write
\begin{align}
\int dk'\sigma(k,k')\tilde{V}(t,\xi,k',\eta) \approx \sum_{j'=1}^{K^d} \sigma_{j, j'} \tilde{V}(t,\xi, k'_{j'}, \eta)
\end{align}
where $\sigma_{j,j'}=\sigma(k_j, k'_{j'})/K^d$ are the elements of a symmetric (and Hermitian) matrix $\bb{\sigma}$, since we are looking at isotropic scattering. 
Then from Eq.~\eqref{eq:tildev} the vector $\bb{\tilde{V}}(t)$ obeys 
\begin{align} \label{eq:boltzmannfinal}
i\frac{d \bb{\tilde{V}}}{dt}=\bb{H}_{\text{total}}\bb{\tilde{V}}
\end{align}
where the matrix $\bb{H}$ is Hermitian 
\begin{align} \label{eq:hboltzmann}
\bb{H}_{\text{total}}=\bb{L}_{[\xi, k]} \otimes \bb{1}-\mathbf{1}\otimes \bb{\Sigma} \otimes \bb{D}+\mathbf{1}\otimes \bb{\sigma} \otimes \bb{D}.
\end{align}
Here $\bb{D}=\text{diag}(\mu_1,...,\mu_N)$ with $\mu_k=\pi(k-N/2)$ and $\bb{\Sigma}=\text{diag}(\Sigma(k_1),...,\Sigma(k_{K^d}))$. The matrix $\bb{L}_{[\xi, k]}$ acts on $\bb{\tilde{V}}_{[\xi]}\otimes \bb{\tilde{V}}_{[k]}$ and multiplies its $(i,j)^{\text{th}}$ component it by factor $\xi_i \cdot k_j$.

Since $\bb{H}_{\text{max}}$ is Hermitian, one can use Hamiltonian simulation methods in Eq.~\eqref{eq:boltzmannfinal} where $\bb{\tilde{V}}(t)=\exp(-i\bb{H}_{\text{total}}t)\bb{\tilde{V}}(0)$ and one can prepare the quantum state $|\tilde{V}(t)\rangle=(1/\|\bb{\tilde{V}}(t)\|)\sum_{i=1}^{J^d}\sum_{j=1}^{K^d}\sum_{k=1}^N \bb{\tilde{V}}(t)|i,j,k\rangle$ by just evolving the Hamiltonian $\bb{H}_{\text{total}}$, when given the initial state $|V(0)\rangle$. Note that $\|\bb{\tilde{V}}(t)\|=\|\bb{\tilde{V}}(0)\|$ since $\bb{\tilde{V}}(t)$ evolves by a unitary transformation. This can be used to prepare the quantum state whose amplitude is proportional to the average Wigner function of the quantum state in a random medium in the following way. 

Let $\bb{W}(t)$ be the vector whose elements are  $W(t,x_i, k_j)$ at grid points $(x_i, k_j)$. Suppose we want to prepare the quantum state $|W(t)\rangle=(1/\|\bb{W}(t)\|)\sum_{i=1}^{M^d}\sum_{j=1}^{K^d}\bb{W}(t)|i,j\rangle$. By definition, $\bb{V}(t)=(\mathcal{F}^{-1}_{x}\otimes \mathbf{1}\otimes \mathcal{F}^{-1}_p)\bb{\tilde{V}}(t)$, where $\mathcal{F}^{-1}_{(\cdot)}$ is the inverse Fourier transform with respect to variable $(\cdot)$ and $\bb{V}(t)$ is the discretisation of the warped phase transformation $V(t,x,k,p)$. Then $|V(t)\rangle=(1/\|\bb{V}(t)\|)\sum_{i=1}^{M^d}\sum_{j=1}^{K^d}\sum_{k=1}^N \bb{W}(t) \exp(-p_k)|i,j,k\rangle$. Fourier transforms do not change norms, so $\|\bb{V}(t)\|=\|\bb{\tilde{V}}(t)\|=\|\bb{\tilde{V}}(0)\|=\|\bb{V}(0)\|$. 
We can use the projection $\hat{P}$ in Appendix B and show $|W(t)\rangle$ and $|\tilde{V}(t)\rangle$ are related by
\begin{align}
& \hat{P}(\mathcal{F}^{-1}_x\otimes \mathbf{1}\otimes \mathcal{F}^{-1}_p)|\tilde{V}(t)\rangle=\frac{\|\bb{W}(t)\|}{\|\bb{V}(0)\|}|W(t)\rangle\sum_{k=1}^N e^{-p_k}|k\rangle \\ \nonumber 
&=\frac{\|\bb{W}(t)\|\|\exp(-p)\|}{\|\bb{V}(0)\|}|W(t)\rangle |\exp(-p)\rangle 
\end{align}
where $|\exp(-p)\rangle$ is the state with normalisation constant $ \| \exp(-p)\| \sim \mathcal{O}(\sqrt{N})$, as defined in Appendix B. Then following similar methods in Appendix B, it can be proved that $\|\bb{V}(0)\|/\|\bb{W}(t)\| \sim \sqrt{N}\|\bb{W}(0)\|/\|\bb{W}(t)\|$.



The proof of the resource cost in obtaining $|W(t)\rangle$ follows in the same way as shown in Appendix B. Let $s$ be the sparsity of $\bb{H}_{\text{max}}$ and $\|\bb{H}_{\text{total}}\|_{\text{max}}$ is its max-norm. Then following Lemma~\ref{eq:lemmasimulation}, we see that, given $|W(0)\rangle$, the query and gate complexities to obtain $|W(t)\rangle$ are respectively $\tilde{\mathcal{O}}((\|\bb{W}(0)\|/\|\bb{W}(t)\|)st \|\bb{H}_{\text{total}}\|_{\text{max}})$ and $\tilde{\mathcal{O}}((\|\bb{W}(0)\|/\|\bb{W}(t)\|)dst \|\bb{H}_{\text{total}}\|_{\text{max}})$.

From Eq.~\eqref{eq:hboltzmann}, one sees that the sparsity $s$ is dominated by $s(\bb{\sigma})$, which is the sparsity of $\bb{\sigma}$. This means the more off-diagonal terms in $\bb{\sigma}$ (corresponding to more scattering to different wave vectors), the larger the cost in preparing $|W(t)\rangle$. To find the max-norm $\|\bb{H}_{\text{total}}\|_{\text{max}}$, we note that $\|\bb{L}_{[\xi, k]}\|_{\text{max}}\sim \|\bb{\sigma}\otimes \bb{D}\|_{\text{max}} \sim \|\bb{\sigma}\otimes \bb{D}\|_{\text{max}} \sim \mathcal{O}(1/\epsilon)$ where $\bb{\sigma}$ and $\bb{\Sigma}$ are order one factors. This means that  the Schr\"odingerisation approach {\it does not change the order of $\|\bb{H}_{\text{total}}\|_{\text{max}}$}, since the transformed equation remains first order as the original equation.

We can also apply this to finding the stationary state of the semiclassical limit of the Wigner function, i.e., $|W(t_{s})\rangle$ when $d\bb{W}(t)/dt\vert_{t=t_s}=0$. The analysis proceeds in the same way to preparing the ground state in Appendix C. The transport operator in Eq.~\eqref{eq:transport} is dissipative, or rather hypocoercivity, with spectral gap shown in \cite{Mouhot}. In the case of isotropic scattering, the spectra includes $0$ and all other strictly negative ones, depending on Chandrasekhar's H-function \cite{Chand}. We can thus use $E_0=0$ in this case and the same cost as in Appendix C follows.

We can also extend the above analysis in the high $d$ limit by reinterpreting as a multiparticle $n=d/3$ limit with each particle moving in three spatial dimensions. However, in this case, $\omega^2=k^2/2+U_C$ since the interparticle interactions with a potential $U_C$ (e.g. Coulomb) needs to be included. This means we need to augment Eq.~\eqref{eq:wignerpde} by an extra term
 \begin{align} \label{eq:wignerpde2} 
& \partial_t W+k \cdot \nabla_x W-\nabla_x U_C \cdot \nabla_k W \nonumber \\
&=\int dk'\sigma(k,k')W(t,x,k')-\Sigma(k) W.
 \end{align}
Here we omit the precise details (see \cite{schr2}), but one can discretise in $x$, $k$ and rewrite this as a system of linear ODEs and apply the general formalism.

\end{document}